\documentclass[10pt,twocolumn,twoside]{IEEEtran}

\IEEEoverridecommandlockouts  


\usepackage{amssymb,amsmath,amsfonts,amsthm}
\usepackage{algorithm,algorithmic}
\usepackage{cite}
\usepackage{float}
\usepackage{epsfig}
\usepackage{graphicx}
\usepackage{graphics}
\usepackage{subfigure}
\usepackage{url}
\usepackage{xspace}
\usepackage[usenames,dvipsnames]{color}
\usepackage{soul}
\usepackage{mathtools}
\usepackage{tikz}
\usetikzlibrary{arrows,quotes}

\setlength{\abovecaptionskip}{0pt plus 0pt minus 0pt} 


\floatstyle{ruled}
\newfloat{model}{H}{mod}
\floatname{model}{\footnotesize Model}
\newfloat{notatio}{H}{not}
\floatname{notatio}{\footnotesize Notation}

\newenvironment{varalgorithm}[1]
  {\algorithm}
  {\endalgorithm}

\newenvironment{list4}{
	\begin{list}{$\bullet$}{%
			\setlength{\itemsep}{0.05cm}
			\setlength{\labelsep}{0.2cm}
			\setlength{\labelwidth}{0.3cm}
			\setlength{\parsep}{0in} 
			\setlength{\parskip}{0in}
			\setlength{\topsep}{0in} 
			\setlength{\partopsep}{0in}
			\setlength{\leftmargin}{0.16in}}}
	{\end{list}}

\newenvironment{list4a}{
	\begin{list}{$\bullet$}{%
			\setlength{\itemsep}{0.05cm}
			\setlength{\labelsep}{0.2cm}
			\setlength{\labelwidth}{0.3cm}
			\setlength{\parsep}{0in} 
			\setlength{\parskip}{0in}
			\setlength{\topsep}{0in} 
			\setlength{\partopsep}{0in}
			\setlength{\leftmargin}{0.16in}}}
	{\end{list}}

\usepackage{url,changebar,bm,xspace,dsfont}
\let\mathbb=\mathds 


\newtheorem{theorem}{Theorem}

\newtheorem{prop}{Proposition}
\newtheorem{assum}{Assumption}

\newtheorem{remark}{Remark}

\ifCLASSINFOpdf
 \else
\fi

\hyphenation{op-tical net-works semi-conduc-tor}

\usepackage{amsmath,amsfonts,amssymb,amscd}
\usepackage{capt-of}
\usepackage{color}
\usepackage{cite}
\usepackage{verbatim}
\usepackage{hyperref}

\hypersetup{
    colorlinks = true,
    linkcolor = [rgb]{0,0,1},
    anchorcolor = [rgb]{0,0,1},
    citecolor = [rgb]{0.9,0.5,0},
    filecolor = [rgb]{0,0,1},
    urlcolor = [rgb]{0.9,0.5,0},
    bookmarksopen = true,
    bookmarksnumbered = true,
    breaklinks = true,
    linktocpage,
    colorlinks = true,
    linkcolor = [rgb]{0.2,0.6,0.2},
    urlcolor  = [rgb]{0,0,1},
    citecolor = [rgb]{0.9,0.5,0},
    anchorcolor = [rgb]{0.2,0.6,0.2},
}


\hyphenation{op-tical net-works semi-conduc-tor}

\begin{document}

\title{\LARGE \bf Distributed Optimization with Finite Bit Adaptive Quantization for Efficient Communication and Precision Enhancement} 

\author{Apostolos~I.~Rikos, Wei~Jiang, Themistoklis~Charalambous, and Karl~H.~Johansson
\thanks{Apostolos~I.~Rikos is with the Artificial Intelligence Thrust of the Information Hub, The Hong Kong University of Science and Technology (Guangzhou), Guangzhou, China. 
He is also affiliated with the Department of Computer Science and Engineering, The Hong Kong University of Science and Technology, Clear Water Bay, Hong Kong, China. 
E-mail: {\tt~apostolosr@hkust-gz.edu.cn}.} 
\thanks{Wei Jiang was with the Department of Electrical Engineering and Automation, School of Electrical Engineering, Aalto University, Espoo, Finland.
Email: {\tt wjiang.lab@gmail.com.}} 
\thanks{Themistoklis~Charalambous is with the Department of Electrical and Computer Engineering, School of Engineering, University of Cyprus, 1678 Nicosia, Cyprus.  E-mail:{\tt~charalambous.themistoklis@ucy.ac.cy}. T.~Charalambous is also with the Department of Electrical Engineering and Automation, School of Electrical Engineering, Aalto University, Espoo, Finland. Email: {\tt themistoklis.charalambous@aalto.fi.}} 
\thanks{Karl~H.~Johansson is with the Division of Decision and Control Systems, KTH Royal Institute of Technology, SE-100 44 Stockholm, Sweden, also affiliated with Digital Futures. E-mails: {\tt kallej@kth.se}.} 
\thanks{Part of this work was supported by the Knut and Alice Wallenberg Foundation, the Swedish Research Council, and the Swedish Foundation for Strategic Research. 
The work of T. Charalambous was partly supported by the European Research Council (ERC) Consolidator Grant MINERVA (Grant agreement No. 101044629).
}
}

\maketitle
\pagestyle{empty}

%
%
%
%
\begin{abstract}
In realistic distributed optimization scenarios, individual nodes possess only partial information and communicate over bandwidth constrained channels. 
For this reason, the development of efficient distributed algorithms is essential. 
In our paper we addresses the challenge of unconstrained distributed optimization. 
In our scenario each node's local function exhibits strong convexity with Lipschitz continuous gradients. 
The exchange of information between nodes occurs through $3$-bit bandwidth-limited channels (i.e., nodes exchange messages represented by a only $3$-bits). 
Our proposed algorithm respects the network's bandwidth constraints by leveraging zoom-in and zoom-out operations to adjust quantizer parameters dynamically. 
We show that during our algorithm's operation nodes are able to converge to the exact optimal solution. 
Furthermore, we show that our algorithm achieves a linear convergence rate to the optimal solution. 
We conclude the paper with simulations that highlight our algorithm's unique characteristics.  
\end{abstract}


%
%
%
%
\section{Introduction}\label{sec:intro}

Distributed optimization has become increasingly important in various domains. 
Such domains range from large-scale machine learning \cite{2020:Nedich}, control \cite{SEYBOTH:2013}, to other data-driven applications \cite{2018:Stich_Jaggi} that involve large amounts of data. 
In the distributed setting multiple nodes collaborate to collectively solve optimization problems with each node possessing only partial information about the global objective. 
Distributing the optimization process offers advantages in terms of scalability, fault tolerance, and adaptability to dynamic environments \cite{nedic2018distributed}.

In current literature, numerous algorithms have been developed for distributed optimization \cite{2009:Nedic_Optim, 2018:Khan_AB, 2020:Vivek_Salapaka, 2021:Tiancheng_Uribe}. 
These algorithms typically necessitate nodes to exchange real-valued messages with high or infinite precision, representing local gradients or optimization variables. 
However, exchanging messages with infinite precision is often impractical or even impossible due to bandwidth constraints, as it requires the transmission of an infinite number of bits per second. 
Consequently, such algorithms tend to incur significant communication overhead, particularly in scenarios involving bandwidth-limited channels.

Due to the aforementioned limitations, there is a growing interest for distributed optimization algorithms that can operate efficiently under communication constraints. 
Specifically, there has been a surge of interest in distributed optimization approaches where nodes exchange compressed packets or quantized messages \cite{2014:Peng_Yiguan, 2016:Huang_Xiao, 2017:Ye_Zeilinger_Jones, 2018:Huaqing_Xie, 2021:Doan_Romberg, 2020:Magnusson_NaLi, 2021:Jhunjhunwala_Eldar, 2021:Kajiyama_Takai, 2022:Liu_Daniel,  2019:Koloskova_Jaggi, 2019:Basu_Diggavi, 2020:Li_Chi, horvath2023stochastic}. 
These approaches aim to reduce communication overhead while still enabling nodes to converge to the optimal solution or approximate it with high accuracy that depends on the utilized quantization level. 
However, in scenarios (such as resource-constrained networks or low-power devices) where communication among nodes is limited to only a small fixed amount of bits (e.g., $3$, $4$, or $5$-bits per message), the aforementioned approaches face limitations. 
Algorithms that respect this constraint usually converge to a neighborhood of the optimal solution leading to an error floor that may be considerably large \cite{2023:Rikos_Johan_IFAC}. 
Refining the quantization level allows for approximation with higher precision (potentially enabling convergence to the exact optimal solution) \cite{rikos2023distributed}. 
But the refinement process increases the number of exchanged bits, leading to constraint violation and may compromise the efficiency gains achieved through quantization. 

\textbf{Main Contributions.} 
Motivated by the aforementioned limitations and challenges, in our paper we propose a distributed optimization algorithm that exhibits efficient communication among nodes. 
The main advantage of our algorithm compared to the current literature is its ability to operate in scenarios where communication is limited to only a small fixed amount of bits. 
Additionally, due to its ability to adjust the quantizer parameters, it enables nodes to converge to the exact optimal solution. 
Our main contributions are the following: 
\begin{itemize}
    \item We propose a distributed optimization algorithm that exhibits efficient communication enabling nodes to exchange messages consisting only $3$-bits; see Algorithm~\ref{alg1}. 
    Our algorithm's operation relies on zoom-in and zoom-out operations according to a set of conditions. 
    These operations adjust the quantizer parameters (such as basis and range) to focus on the estimated optimal solution while ensuring approximation with higher precision. 
    \item We show that our algorithm's operation, enables nodes to estimate the \textit{exact} optimal solution exhibiting linear convergence rate (thus avoiding convergence to a neighborhood of the optimal solution due to the quantization effect); see Theorem~\ref{converge_Alg1}. 
\end{itemize}

\section{NOTATION AND BACKGROUND}\label{sec:preliminaries}

\textbf{Mathematical Notation and Symbols.}  
The symbols $\mathbb{R}, \mathbb{Q}, \mathbb{Z}$, and $\mathbb{N}$ represent the sets of real, rational, integer, and natural numbers, respectively. 
The symbols $\mathbb{Q}_{> 1}, \mathbb{R}_{\geq 0}$ denote the set of rational numbers greater than one, and the set of positive real numbers, respectively. 
The symbol $\mathbb{R}_{\geq 0}$ indicates the nonnegative orthant of the $n$-dimensional real space $\mathbb{R}^n$. 
Matrices are denoted by capital letters (e.g.,  $A$), and vectors by small letters (e.g.,  $x$). 
The transpose of a matrix $A$ and a vector $x$ are denoted as $A^\top$, $x^\top$, respectively.
For any real number $a \in \mathbb{R}$, $\lfloor a \rfloor$ denotes the greatest integer less than or equal to $a$, and $\lceil a \rceil$ denotes the least integer greater than or equal to $a$. 
For a matrix $A \in \mathbb{R}^{n \times n}$, $a_{ij}$ denotes the entry in row $i$ and column $j$. 
The all-ones vector is denoted by $\mathbb{1}$, and the identity matrix by $\mathbb{I}$ of appropriate dimensions. 
The Euclidean norm of a vector is represented by $\| \cdot \|$. 

\textbf{Graph-Theoretic Notions.} 
The communication network is modeled as a directed graph (digraph), denoted by $\mathcal{G} = (\mathcal{V}, \mathcal{E})$. 
This digraph comprises $n$ nodes ($n \geq 2$), communicating solely with their immediate neighbors, and remains static over time. 
In $\mathcal{G}$, the node set is represented as $\mathcal{V} = \{ v_1, v_2, ..., v_n \}$, and the edge set as $\mathcal{E} \subseteq \{ \mathcal{V} \times \mathcal{V} \} \cup \{ (v_i, v_i) \ | \ v_i \in \mathcal{V} \}$ (each node has a virtual self-edge). 
The number of nodes and edges are denoted by $| \mathcal{V} | = n$ and $| \mathcal{E} | = m$, respectively. 
A directed edge from node $v_i$ to node $v_l$ is denoted by $(v_l, v_i) \in \mathcal{E}$, signifying that node $v_l$ can receive information from node $v_i$ at time step $k$ (but not vice versa). 
The in-neighbors of $v_i$ are the nodes that can directly transmit information to $v_i$, denoted by $\mathcal{N}_i^- = \{ v_j \in \mathcal{V} ; | ; (v_i, v_j)\in \mathcal{E}\}$. 
The out-neighbors of $v_i$ are the nodes that can directly receive information from $v_i$, represented by $\mathcal{N}_i^+ = \{ v_l \in \mathcal{V} ; | ; (v_l, v_i)\in \mathcal{E}\}$. 
The in-degree and out-degree of $v_j$ are denoted by $\mathcal{D}_i^- = | \mathcal{N}_i^- |$ and $\mathcal{D}_i^+ = | \mathcal{N}_i^+ |$, respectively. 
The diameter $D$ of a digraph is the longest shortest path between any two nodes $v_l, v_i \in \mathcal{V}$. 
A directed path from $v_i$ to $v_l$ of length $t$ exists if there's a sequence of nodes $i \equiv l_0,l_1, \dots, l_t \equiv l$ such that $(l{\tau+1}, l{\tau}) \in \mathcal{E}$ for $ \tau = 0, 1, \dots , t-1$. 
A digraph is strongly connected if there's a directed path from every node $v_i$ to every node $v_l$ for all $v_i, v_l \in \mathcal{V}$.

\textbf{Node Operation.} 
During the operation of our proposed distributed optimization algorithm each node $v_i \in \mathcal{V}$ executes a local optimization step and then, a distributed coordination algorithm. 
In the distributed optimization algorithm, at each time step $k$, each node $v_i$ maintains (i) its local estimate variable $x_i^{[k]} \in \mathbb{Q}$ for computing the optimal solution, (ii) a learning rate $\alpha \in \mathbb{R}$ for optimizing its local function, (iii) the quantizer basis $b_q \in \mathbb{Q}$ that can be used for shifting and focusing on different ranges of values, (iv) the quantizer level $\Delta^{[\nu_{\text{total}}]} \in \mathbb{Q}$ that can be quantizing its local estimate variable ($\nu_{\text{total}}$ is explained below), 
(v) a zoom-in constant $c^{\text{in}} \in \mathbb{Q}_{> 1}$ for reducing the quantization level in order to estimate the optimal solution more precisely, (vi) a zoom-in counter $\nu_{\text{in}} \in \mathbb{N}$ in order to count the times nodes performed zoom-in, (vii) a zoom-out constant $c^{\text{out}} \in \mathbb{Q}_{> 1}$ for increasing the quantization level so that the quantizer range covers a wider range of values that may include the optimal solution, (viii) a zoom-out counter $\nu_{\text{out}} \in \mathbb{N}$ in order to count the times nodes performed zoom-out, and (ix) a total zoom counter $\nu_{\text{total}} \in \mathbb{N}$ in order to count the times nodes performed zoom operation (note that $\nu_{\text{total}} = \nu_{\text{in}} + \nu_{\text{out}}$). 
In the coordination algorithm, at each time step $k$, every node $v_i$ maintains (i) stopping variables $M_i$ and $m_i \in \mathbb{N}$ to determine convergence, (ii) mass variables $y_i^{[\lambda]}, z_i^{[\lambda]} \in \mathbb{Q}$ for communication by sending or receiving messages with other nodes, and (iii) state variables $y_{i,(s)}^{[\lambda]}, z_{i,(s)}^{[\lambda]}, q_{i,(s)}^{[\lambda]} \in \mathbb{Q}$ to store received messages and compute the coordination algorithm's result.

\textbf{$3$-bit Mid-Rise Uniform Quantizer with Base Shifting.} 
A quantizer discretizes a continuous range of values into a finite set of discrete values, reducing the number of bits needed to represent information. 
This compression minimizes bandwidth requirements for message transmission and enhances power and computational efficiency. 
Quantization is vital in scenarios like wireless communication, distributed control systems, and sensor networks, where communication constraints and imperfect information exchanges occur. 
The most common quantizers are uniform quantizers that partition the continuous value range into equally-sized intervals. 
Other types of quantizers include uniform midtread and non-uniform. 
For the results of this paper, we will utilize a $3$-bit mid-rise uniform quantizer with base shifting capabilities, defined as follows: 
\begin{align*}
&Q^{\text{$3$MRU}}(b_q, \xi, \Delta^{[\nu_{\text{total}}]})  \\
    &=\left\{ \begin{array}{ll}
         b_q - \dfrac{7 \Delta^{[\nu_{\text{total}}]}}{2}, & \xi \in (-\infty, b_q-3\Delta^{[\nu_{\text{total}}]}) \\ \ \vspace{-.3cm} \\ 
         b_q - \dfrac{5 \Delta^{[\nu_{\text{total}}]}}{2}, & \xi \in [b_q-3\Delta^{[\nu_{\text{total}}]}, b_q-2\Delta^{[\nu_{\text{total}}]}) \\ \ \vspace{-.3cm} \\ 
         b_q - \dfrac{3 \Delta^{[\nu_{\text{total}}]}}{2}, & \xi \in [b_q-2\Delta^{[\nu_{\text{total}}]}, b_q-\Delta^{[\nu_{\text{total}}]}) \\ \ \vspace{-.3cm} \\ 
         b_q - \dfrac{\Delta^{[\nu_{\text{total}}]}}{2}, & \xi \in [b_q-\Delta^{[\nu_{\text{total}}]}, b_q) \\ \ \vspace{-.3cm} \\ 
         b_q + \dfrac{\Delta^{[\nu_{\text{total}}]}}{2}, & \xi \in [b_q, b_q + \Delta^{[\nu_{\text{total}}]}) \\ \ \vspace{-.3cm} \\ 
         b_q + \dfrac{3 \Delta^{[\nu_{\text{total}}]}}{2}, & \xi \in [b_q + \Delta^{[\nu_{\text{total}}]}, b_q + 2\Delta^{[\nu_{\text{total}}]}) \\ \ \vspace{-.3cm} \\ 
         b_q + \dfrac{5 \Delta^{[\nu_{\text{total}}]}}{2}, & \xi \in [b_q + 2 \Delta^{[\nu_{\text{total}}]}, b_q + 3 \Delta^{[\nu_{\text{total}}]}) \\ \ \vspace{-.3cm} \\
         b_q + \dfrac{7 \Delta^{[\nu_{\text{total}}]}}{2}, & \xi \in [b_q + 3 \Delta^{[\nu_{\text{total}}]}, + \infty) \\ \ \vspace{-.3cm} \\
    \end{array} \right.
\end{align*} 
where $\xi \in \mathbb{R}$ is the value to be quantized, $b_q \in \mathbb{Q}$ is the basis of the quantizer,  $Q^{\text{$3$MRU}}(b_q, \xi,\Delta^{[\nu_{\text{total}}]})$ is the quantized version of $\xi$, and $\Delta^{[\nu_{\text{total}}]} \in \mathbb{Q}$ is the quantization level. 
Our $3$-bit mid-rise uniform quantizer is also shown in Fig.~\ref{MRU_model_fig}. 
Note here that our findings can also be generalized to other types of quantizers. 

\begin{figure}[ht]
\centering
\begin{tikzpicture}[scale=0.69,transform shape,
    thick,
    >=stealth',
     empty dot/.style = { circle, draw, fill = white!0,
                          inner sep = 0pt, minimum size = 4pt },
    filled dot/.style = { empty dot, fill = black}
  ]
  \def\r{2}
  \draw[->] (-6,0) -- (6,0) coordinate[label = {below:$\xi$}] (xmax);
  \draw[densely dashed, draw=lightgray] (0,-3.75) -- (0,3.75) coordinate[label = {left:$Q^{3\mathrm{MRU}}$}]  (ymax);
 \draw [draw=blue] (3,3.5) -- (3,2.5);
\node [filled dot, draw=blue, fill=blue] at (3,3.5) {};
\draw  [draw=blue] (3,3.5) -- (5,3.5); 
  \foreach \i in {\r+1,\r,\r-1} {
    \draw [draw=blue] (\i-1,\i-1/2)   -- (\i,\i-1/2);
    \draw [draw=blue] (\i-1,\i-1/2)   -- (\i-1,\i-3/2);
    \draw [densely dashed, draw=lightgray] (\i,\i-1/2) -- (\i,0);
    \node [filled dot, draw=blue, fill=blue] at (\i-1,\i-1/2) {};
    \node [empty  dot,draw=blue] at (\i,\i-1/2) {};
  }

   \draw[draw=blue] (-3,-3.5) -- (-5,-3.5);
  
  \foreach \i in {\r-5,\r-4,\r-3} {
    \draw [draw=blue] (\i,\i+1/2)   -- (\i+1,\i+1/2);
    \draw [draw=blue] (\i,\i+1/2)  -- (\i,\i-1/2);
    \draw [densely dashed, draw=lightgray] (\i,\i+1/2) -- (\i,0);
    \node [filled dot, draw=blue, fill=blue] at (\i,\i+1/2) {};
  }
  \foreach \i in {\r-6,\r-5,\r-4,\r-3} {
  \node [empty  dot,draw=blue] at (\i+1,\i+1/2) {};
 }

  \node ["right:\scriptsize $b_q$"] at (\r-2.2,-0.2) {};
  \node ["below:\scriptsize $b_q\!+\!\bar{\Delta}$"]     at (\r-1,0.18)   {};
  \node ["below:\scriptsize $b_q\!+\!2\bar{\Delta}$"]     at (\r,0.18)   {};
  \node ["below:\scriptsize $b_q\!+\!3\bar{\Delta}$"]     at (\r+1,0.18)   {};
    
  \node ["above:\scriptsize $b_q\!-\!\bar{\Delta}$"]     at (\r-3,-0.18)   {};  
  \node ["above:\scriptsize $b_q\!-\!2\bar{\Delta}$"]     at (\r-4,-0.18)   {}; 
  \node ["above:\scriptsize $b_q\!-\!3\bar{\Delta}$"]     at (\r-5,-0.18)   {};  
  
  \node ["below:\scriptsize $000$"]     at (\r-5.5,-3.32)   {};  
  \node ["below:\scriptsize $001$"]     at (\r-4.5,-2.32)   {};  
  \node ["below:\scriptsize $010$"]     at (\r-3.5,-1.32)   {};  
   \node ["below:\scriptsize $011$"]     at (\r-2.5,-0.32)   {}; 
   
   \node ["above:\scriptsize $100$"]     at (\r-1.5,0.32)   {}; 
   \node ["above:\scriptsize $101$"]     at (\r-0.5,1.32)   {}; 
   \node ["above:\scriptsize $110$"]     at (\r+0.5,2.32)   {}; 
    \node ["above:\scriptsize $111$"]     at (\r+1.5,3.32)   {}; 
\end{tikzpicture}
\caption{A $3$-bit mid-rise uniform quantizer,  $Q^{\text{$3$MRU}}(b_q, \xi,\bar{\Delta})$.}
\label{MRU_model_fig}
\end{figure}
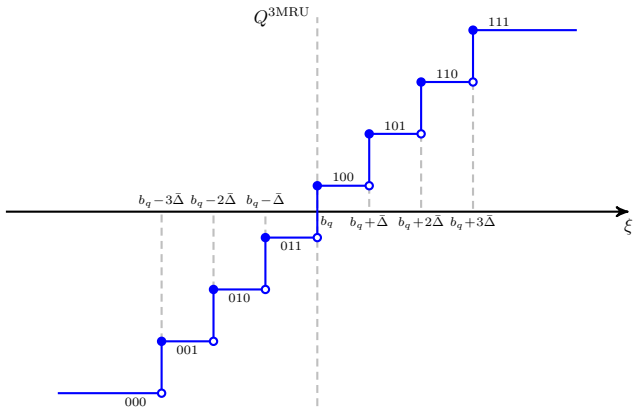

Our quantizer (which is specified for exposition purposes only) utilizes $3$ bits to represent quantized values within the output range $\{b_q - \dfrac{7 \Delta^{[\nu_{\text{total}}]}}{2}, b_q - \dfrac{5 \Delta^{[\nu_{\text{total}}]}}{2}, ..., b_q + \dfrac{7 \Delta^{[\nu_{\text{total}}]}}{2}\}$, and has $2^3$ output levels (i.e., the output range contains $2^3$ elements). 
Furthermore, note that our quantizer saturates for input values in the range $\xi \in (-\infty, b_q - 3\Delta^{[\nu_{\text{total}}]}) \cup [b_q + 3 \Delta^{[\nu_{\text{total}}]}, +\infty)$. 
This means that the input signal $\xi$ exceeds our quantizer's dynamic range, leading to clipping of the quantized output value to the nearest extreme value within the quantization range. 
This characteristic highlights the importance of selecting an appropriate quantization range to ensure an accurate representation of input signals without distortion due to saturation effects. 


%
%
%
%
\section{Problem Formulation}\label{sec:probForm}

Let us consider a distributed network modeled as a digraph $\mathcal{G} = (\mathcal{V}, \mathcal{E})$ with $n  = | \mathcal{V} |$ nodes. 
We assume that each node $v_i$ is endowed with a local cost function $f_i(x): \mathbb{R}^p \mapsto \mathbb{R}$ only known to itself, and communication channels among nodes have limited capacity of $3$ bits per channel, thus, only quantized values can be transmitted/communicated.
In this paper we aim to develop a distributed algorithm which allows nodes, to cooperatively solve the following optimization problem called \textbf{P1} below: 
\begin{subequations}
\begin{align}
\min_{x\in \mathcal{X}}~ & F(x_1, x_2, ..., x_n) \equiv \sum_{i=1}^n f_i(x_i), \label{Global_cost_function}  \\
\text{s.t.}~ & x_i = x_j, \forall v_i, v_j, \in \mathcal{V}, \label{constr_same_x}  \\
       & x_i^{[0]} \in \mathcal{X} \subset \mathbb{Q}_{\geq 0}, \forall v_i \in \mathcal{V}, \label{constr_x_in_X} \\
       & \text{nodes communicate via $3$ bit channels, } \label{constr_quant}  
\end{align} 
\end{subequations} 
where $\mathcal{X}$ is the set of feasible values of variable $x$, and $x^*$ is the optimal solution of the optimization problem.  
Eq.~\eqref{Global_cost_function} means that our objective is to minimize the global cost function, defined as the sum of the local cost functions in the network. 
Eq.~\eqref{constr_same_x} imposes the requirement that nodes need to calculate equal optimal solutions. 
Eq.~\eqref{constr_x_in_X} means that the initial estimations of nodes must belong in the same set. 
However, it is not necessary for the initial values of nodes to be rational numbers (i.e., $x_i^{[0]} \in \mathcal{X} \subset \mathbb{Q}_{\geq 0}$), as they can generate a quantized version of their initial states by utilizing the Mid-Rise Uniform Quantizer presented in Section~\ref{sec:preliminaries}. 
Eq.~\eqref{constr_quant} means that nodes are transmitting and receiving $3$ bit quantized values with their neighbors. 
This constraint reflects the limited bandwidth of the network's communication channels. 


\begin{remark}[Challenges of Optimization Problem]   
The main challenges of our optimization problem \textbf{P1} are due to constraint \eqref{constr_quant} (i.e., nodes are limited to exchanging messages represented by only $3$ bits). 
Our proposed algorithm is able to address these challenges by enabling nodes to decide whether to zoom-in or zoom-out to the optimal solution, ensuring both communication efficiency and avoidance of saturation in the quantizer output range. 
Moreover, our algorithm enables nodes to calculate the exact optimal solution while exchanging $3$-bit messages, mitigating quantization errors and saturation issues.
\end{remark}

%
%
%
%
\section{Distributed Optimization with Finite Bit Adaptive Quantization}\label{sec:distr_grad_zoom_quant}

In this section, we present our distributed optimization algorithm which solves problem \textbf{P1} described in Section~\ref{sec:probForm}. 
Before presenting our algorithm, we make the following assumptions that are crucial for the development of our results. 


\begin{assum}\label{str_conn}
The communication network described as a digraph $\mathcal{G}$ is \textit{strongly connected}. 
\end{assum}

\begin{assum}\label{assup_convex}
    The local cost function $f_i(x)$ of every node $v_i$ is smooth and strongly convex. 
    This means that for every node $v_i$, for every $x_1, x_2 \in \mathcal{X}$, 
    \begin{list4}
        \item there exists positive constant $L_i$ such that  \begin{equation}\label{lipschitz_defn}
            \| \nabla f_i(x_1) - \nabla f_i(x_2) \|_2 \leq L_i \| x_1 - x_2 \|_2, 
        \end{equation}
        \item there exists positive constant $\mu_i$ such that 
        \begin{equation}\label{str_conv_defn}
             f_i(x_2) \geq f_i(x_1) + \nabla f_i(x_1)^\top (x_2 - x_1) + \frac{\mu_i}{2} \| x_2 - x_1 \|_2^2. 
        \end{equation} 
    \end{list4} 
\noindent This means that for the global cost function $F$ (see \eqref{Global_cost_function}), the Lipschitz-continuity constant of the gradient is $L =  \sum_i L_i$, and the strong-convexity constant is $\mu = \min_i \mu_i$. 
\end{assum}

\begin{assum}\label{digr_diam}
The diameter $D$ (or an upper bound of the diameter) of the digraph $\mathcal{G}$ is known to every node $v_i$ in the network. 
\end{assum}

Assumption~\ref{str_conn} ensures that information from each node propagates to every other node in the network.
Thus all nodes are able to calculate the optimal solution $x^*$ of $P1$. 
Assumption~\ref{assup_convex} is the Lipschitz-continuity condition \eqref{lipschitz_defn}, and strong-convexity condition \eqref{str_conv_defn}. 
Lipschitz-continuity is a standard assumption in distributed first-order optimization problems (see \cite{2018:Xie, 2018:Li_Quannan}). 
It guarantees that problem $P1$ has an optimal solution $x^*$ for nodes to calculate, and that this solution is globally optimal. 
Strong-convexity guarantees linear convergence rate, and uniqueness of the optimal solution (i.e., the global function $F$ has no more than one global minimum). 
Assumption~\ref{digr_diam} enables each node $v_i \in \mathcal{V}$ to determine whether calculation of a solution $x_i$ that fulfills \eqref{constr_same_x} has been achieved in a distributed manner. 
Note here that nodes can distributively compute the digraph diameter $D$ in finite time by employing a graph diameter calculation protocol such as \cite{oliva2016distributed}.

\subsection{Distributed Optimization Algorithm with Finite Bit Adaptive Quantization}\label{distr_alg}

We now present our proposed algorithm detailed below as Algorithm~\ref{alg1}. 

\begin{varalgorithm}{1}
\caption{Distributed Optimization Algorithm with Finite Bit Adaptive Quantization (DOFiBAQ)}
\textbf{Input:} A strongly connected directed graph $\mathcal{G}$ with $n = |\mathcal{V}|$ nodes and $m = |\mathcal{E}|$ edges. 
Every node $v_i \in \mathcal{V}$ has: static step-size $\alpha \in \mathbb{R}$, digraph diameter $D$, initial value $x_i^{[0]}$, local cost function $f_i$, quantization level $\Delta^{[0]} \in \mathbb{Q}$, zoom-in constant $c^{\text{in}} \in \mathbb{Q}_{> 1}$, zoom-out constant $c^{\text{out}} \in \mathbb{Q}_{> 1}$. 
Assumptions~\ref{str_conn}, \ref{assup_convex}, \ref{digr_diam} hold. 
\\
\textbf{Initialization:} Each node $v_i \in \mathcal{V}$ sets $b_q = 0$, $\nu_{\text{in}} = 0$, $\nu_{\text{out}} = 0$, $\nu_{\text{total}} = 0$. \\ 
\textbf{Iteration:} For $k = 0,1,2, \dots$, each node $v_i \in \mathcal{V}$ does the following: 
\begin{list4}
\item[1)] $x_i^{[k+\frac{1}{2}]} =  x_i^{[k]} - \alpha \nabla f_i(x_i^{[k]})$; 
\item[2)] $x_i^{[k+1]} = $ Algorithm~\ref{alg2}($x_i^{[k+\frac{1}{2}]}, D, \Delta^{[\nu_{\text{total}}]} $); 
\item[3)] \textbf{if} $x_i^{[k+1]} = x_i^{[k]}$, \textbf{then} set $\nu_{\text{total}} = \nu_{\text{total}} + 1$ and \textbf{check}
\begin{list4a}
\item \textbf{if} $x_i^{[k+1]} \geq b_q + 3 \Delta^{[\nu_{\text{total}}]}$, \textbf{or} $x_i^{[k+1]} < b_q - 3 \Delta^{[\nu_{\text{total}}]}$ \textbf{then} set 
\begin{subequations} 
\begin{align} 
    & \nu_{\text{out}} = \nu_{\text{out}} + 1, \label{zoom_out_0} \\ 
    & b_q = x_i^{[k+1]},\label{zoom_out_1} \\
    & \Delta^{[\nu_{\text{total}}]} := c^{\text{out}} \Delta^{[\nu_{\text{total}}-1]},\label{zoom_out_2} 
\end{align} 
\end{subequations} 
\item \textbf{else} set 
\begin{subequations} 
\begin{align} 
    & \nu_{\text{in}} = \nu_{\text{in}} + 1, \label{zoom_in_0} \\ 
    & b_q = x_i^{[k+1]},\label{zoom_in_1} \\
    & \Delta^{[\nu_{\text{total}}]} := \Delta^{[\nu_{\text{total}}-1]} / c^{\text{in}},\label{zoom_in_2} 
\end{align} 
\end{subequations} 
\end{list4a} 
\item[$4)$] go to Step~$1$;
\end{list4} 
\textbf{Output:} Each node $v_i \in \mathcal{V}$ calculates the optimal solution $x_i^*$ of problem \textbf{P1} in Section~\ref{sec:probForm}. 
\label{alg1} 
\end{varalgorithm}

\noindent
\vspace{-0.3cm}    
\begin{varalgorithm}{2}
\caption{Finite-Time Quantized Average Consensus (FiTQuAC)}
\textbf{Input:} $x_i^{[k+\frac{1}{2}]}, D, \Delta^{[\nu_{\text{total}}]}$. 
\\
\textbf{Initialization:} Each node $v_i \in \mathcal{V}$ does the following: 
\begin{list4}
\item[$1)$] Assigns probability $b_{li}$ to each out-neighbor $v_l \in \mathcal{N}^+_i \cup \{v_i\}$, as follows
\begin{align*}
b_{li} = \left\{ \begin{array}{ll}
         \frac{1}{1 + \mathcal{D}_i^+}, & \mbox{if $l = i$ or $v_{l} \in \mathcal{N}_i^+$,} \\
         0, & \mbox{if $l \neq i$ and $v_{l} \notin \mathcal{N}_i^+$;}\end{array} \right. 
\end{align*} 
\item[$2)$] sets $z_i = 2$, $y_i = 2 \  Q^{\text{$3$MRU}}(b_q, x_i^{[k+\frac{1}{2}]}, \Delta^{[\nu_{\text{total}}]}) / \Delta^{[\nu_{\text{total}}]}$.  
\end{list4} 
\textbf{Iteration:} For $\lambda = 1,2,\dots$, each node $v_i \in \mathcal{V}$, does: 
\begin{list4a}
\item[$1)$] \textbf{if} $\lambda \mod (D) = 1$ \textbf{then} $M_i = \lceil y_i  / z_i \rceil$, $m_i = \lfloor y_i / z_i \rfloor$; 
\item[$2)$] broadcasts $M_i$, $m_i$ to every $v_{l} \in \mathcal{N}_i^+$; receives $M_j$, $m_j$ from every $v_{j} \in \mathcal{N}_i^-$; sets $M_i = \max_{v_{j} \in \mathcal{N}_i^-\cup \{ v_i \}} M_j$, \\ $m_i = \min_{v_{j} \in \mathcal{N}_i^-\cup \{ v_i \}} m_j$; 
\item[$3)$] sets $c_i^z = z_i$; 
\item[$4)$] \textbf{while} $c_i^z > 1$ \textbf{do} 
\begin{list4a}
\item[$4.1)$] $c^y_i = \lfloor y_{i} \  / \  z_{i} \rfloor$; 
\item[$4.2)$] sets $y_{i} = y_{i} - c^y_i$, $z_{i} = z_{i} - 1$, and $c_i^z = c_i^z - 1$; 
\item[$4.3)$] transmits $c^y_i$ to randomly chosen out-neighbor $v_l \in \mathcal{N}^+_i \cup \{v_i\}$ according to $b_{li}$; 
\item[$4.4)$] receives $c^y_j$ from $v_j \in \mathcal{N}_i^-$ and sets 
\begin{align}
y_i & = y_i + \sum_{j=1}^{n} w^{[r]}_{\lambda,ij} \ c^y_{j} \ , \\
z_i & = z_i + \sum_{j=1}^{n} w^{[r]}_{\lambda,ij} \ ,
\end{align}
where $w^{[r]}_{\lambda,ij} = 1$ when node $v_i$ receives $c^y_{i}$, $1$ from $v_j$ at time step $\lambda$ (otherwise $w^{[r]}_{\lambda,ij} = 0$ and $v_i$ receives no message at time step $\lambda$ from $v_j$);
\end{list4a} 
\item[$5)$] \textbf{if} $\lambda \mod D = 0$ \textbf{and} $M_i - m_i \leq 1$ \textbf{then} sets $x_i^{[k+1]} = m_i \Delta^{[\nu_{\text{total}}]}$ and stops operation. 
\end{list4a}
\textbf{Output:} $x_i^{[k+1]}$. 
\label{alg2} 
\end{varalgorithm}
 
The intuition of Algorithm~\ref{alg1} (DOFiBAQ) can be summarized in the following three parts. 
\\ \noindent
\textit{Part~$1$: Input and Initialization.} 
Each node $v_i$ retains an estimate $x_i^{[0]}$ of the optimal solution. 
Also, it has knowledge of (i) the network diameter $D$, (ii) a quantization level $\Delta^{[0]}$, (iii) a zoom-in constant $c^{\text{in}} \in \mathbb{Q}_{> 1}$, (iv) a zoom-out constant $c^{\text{out}} \in \mathbb{Q}_{> 1}$, and (v) a static step-size $\alpha \in \mathbb{R}$. 
The basis of the quantizer $b_q$ (see Section~\ref{sec:preliminaries}) is initialized to be equal to zero (our algorithm remains functional also for different arbitrary initial values of $b_q$). 
Also, the total zoom, zoom-in and zoom-out counters $\nu_{\text{total}}$, $\nu_{\text{in}}$, $ \nu_{\text{out}}$ are initialized equal to zero. 
The aforementioned parameters are common for every node. 
Note however that nodes can decide the parameter values in finite time by executing either a voting protocol \cite{2008:Cortes}, or a digraph diameter calculation protocol \cite{oliva2016distributed} (as mentioned in Assumption~\ref{digr_diam}). 
\\ \noindent
\textit{Part~$2$: Iteration -- Gradient Descent and Coordination.} 
During each time step $k$ of Algorithm~\ref{alg1}, each node $v_i$ updates the estimation $x_i^{[k+\frac{1}{2}]}$ of its optimal solution by performing a gradient descent step towards the negative direction of the gradient of the node's local cost function. 
Then, each node employs Algorithm~\ref{alg2} (FiTQuAC). 
Algorithm~\ref{alg2} is a finite-time quantized averaging algorithm (details of its operation are presented below). 
This algorithm enables nodes to satisfy \eqref{constr_quant} and compute in finite time an estimation of the optimal solution $x_i^{[k+1]}$ that satisfies \eqref{constr_same_x}.
\\ \noindent 
\textit{Part~$3$: Iteration -- Zoom-out or Zoom-in.} 
After executing Algorithm~\ref{alg2}, nodes check if the calculated estimation of the optimal solution is the same as the previous step (i.e., if $x_i^{[k+1]} = x_i^{[k]}$). 
The condition $x_i^{[k+1]} = x_i^{[k]}$ means that nodes have converged to a neighborhood of the optimal solution defined by the current quantization level $\Delta^{[\nu_{\text{total}}]}$ (see \cite{2023:Rikos_Johan_IFAC, rikos2023distributed}). 
Additionally, when this condition is met nodes are not able to enhance the precision of the estimated optimal solution using the existing quantization level. 
For this reason, they need to adjust the parameters of the utilized quantizer. 
Therefore, if the aforementioned condition holds, then nodes increase by one the total zoom counter $\nu_{\text{total}}$ check: 
\begin{itemize}
    \item \textit{Zoom-out:} 
    If the estimation exceeds the quantizer's dynamic range (i.e., if the quantizer has saturated because $x_i^{[k+1]} \in (-\infty, b_q - 3\Delta^{[\nu_{\text{total}}]}) \cup [b_q + 3 \Delta^{[\nu_{\text{total}}]}, +\infty)$), then the optimal solution of problem \textbf{P1} falls outside the current quantizer range. 
    This means that nodes are not able to estimate the optimal solution with the current quantizer parameters. 
    In this case nodes zoom-out. 
    This is done by (i) setting the quantizer basis to be equal to the calculated estimation $x_i^{[k+1]}$, and (ii) increasing the quantization level by multiplying it with the zoom-out constant. 
    Also, nodes increase by one the zoom-out counter  $\nu_{\text{out}}$ in order to keep track how many times they performed zoom-out. 
    The primary purpose of zoom-out is to adapt the quantizer's parameters to encompass a broader range of values that may potentially include the optimal solution. 
    \item \textit{Zoom-in:} 
    If the estimate falls within the quantizer's dynamic range (i.e., if the quantizer is not saturated and we have $x_i^{[k+1]} \in [b_q - 3\Delta^{[\nu_{\text{total}}]}, b_q + 3 \Delta^{[\nu_{\text{total}}]})$), then nodes are able to estimate the optimal solution of problem \textbf{P1} with the current quantizer parameters. 
    In this case nodes zoom-in by (i) setting the quantizer basis to be equal to the calculated estimation $x_i^{[k+1]}$, and (ii) decreasing the quantization level by dividing it with the zoom-in constant. 
    Also, nodes increase by one the zoom-in counter  $\nu_{\text{in}}$ to count the amount of zoom-in operations.  
    By zoom-in nodes aim to adapt the quantizer's parameters to 
    ``focus'' at a a smaller range of values that may potentially include the optimal solution. 
    In this smaller range of values nodes are able to estimate with higher precision (since the updated $\Delta^{[\nu_{\text{total}}]}$ is smaller than the original). 
    Later in the paper, we will demonstrate how nodes are able to estimate the exact optimal solution by exchanging only $3$-bit quantized messages with their neighbor nodes (thus fulfilling constraint \eqref{constr_quant}).
\end{itemize}

Algorithm~\ref{alg2} (FiTQuAC) enables each node to calculate the quantized average of each node’s estimate in finite time. 
Additionally it incorporates a distributed stopping mechanism to notify nodes when calculation has converged so that nodes can return to Iteration Step~$3$ of Algorithm~\ref{alg1}. 
During Algorithm~\ref{alg2}, each node processes and transmits quantized messages, and the precision of the average calculation is determined by the quantizer type (i.e., its quantization level, and its saturation limits). 
An analytical intuition of Algorithm~\ref{alg2} (FiTQuAC) is shown in \cite[Algorithm~$2$]{rikos2023distributed}. 
We omit it due to space constraints.

\subsection{Convergence of Algorithm~\ref{alg1}}\label{Conv_Alg}

We now analyze the convergence of Algorithm~\ref{alg1}. 
We first introduce the following proposition to show that by zoom out nodes are able to adjust the quantizer dynamic range in finite time so it includes the optimal solution.  
We then present a theorem which establishes the linear convergence rate of our algorithm. 
Due to space constraints, we omit the proofs. 
They will be available in an extended version of our paper. 

\begin{prop}[Finite Zoom-out Instances]\label{zoom_out_locate_solution}   
    Let us assume that during the operation of Algorithm~\ref{alg1}, $x^* \notin (b_q - 3 \Delta^{[0]}, b_q + 3\Delta^{[0]})$. 
    Then, after a finite number of time steps $\nu_0$ for which 
    \begin{equation}\label{bound_time_steps}
        \nu_0 > \left\lceil \frac{x^* - \log(3\Delta)}{\log(c^{\text{out}})} \right\rceil , 
    \end{equation}
    nodes are able to calculate $\Delta^{[\nu_0]}$ such that $x^* \in (b_q - 3 \Delta^{[\nu_0]}, b_q + 3\Delta^{[\nu_0]})$. 
\end{prop}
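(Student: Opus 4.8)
The plan is to track how the quantizer dynamic range $(b_q - 3\Delta^{[\nu_{\mathrm{total}}]}, b_q + 3\Delta^{[\nu_{\mathrm{total}}]})$ evolves under repeated zoom-out operations and show that, because the quantization level is multiplied by $c^{\mathrm{out}} > 1$ each time, the half-width $3\Delta^{[\nu]}$ grows geometrically and must eventually exceed the (fixed, finite) distance from the current basis to $x^*$. First I would observe that as long as $x^* \notin (b_q - 3\Delta^{[\nu_{\mathrm{total}}]}, b_q + 3\Delta^{[\nu_{\mathrm{total}}]})$, Algorithm~\ref{alg1} keeps triggering the saturation branch: the consensus routine (Algorithm~\ref{alg2}) returns the same clipped estimate $x_i^{[k+1]} = x_i^{[k]}$ (all nodes sit at the saturated extreme of the quantizer), so the condition in Step~$3$ holds, $\nu_{\mathrm{total}}$ is incremented, and since the estimate lies outside the dynamic range the zoom-out branch \eqref{zoom_out_0}--\eqref{zoom_out_2} executes. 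Thus the relevant recursion is purely $\Delta^{[\nu]} = c^{\mathrm{out}}\,\Delta^{[\nu-1]}$, giving $\Delta^{[\nu]} = (c^{\mathrm{out}})^{\nu}\,\Delta^{[0]}$, while $b_q$ is reset to the saturated estimate at each step.

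Next I would bound the distance that must be covered. After the first zoom-out the basis $b_q$ equals a quantized/saturated value that is within a bounded offset of the nodes' current estimates, and the estimates themselves remain in a bounded region (by strong convexity and the Lipschitz bound the gradient-descent iterates with fixed step-size stay in a compact set containing $x^*$). So there is a finite constant $R$ with $|x^* - b_q| \le R$ throughout, and it suffices to find $\nu$ with $3\Delta^{[\nu]} = 3(c^{\mathrm{out}})^{\nu}\Delta^{[0]} > R$, equivalently $\nu > \big(\log R - \log(3\Delta^{[0]})\big)/\log(c^{\mathrm{out}})$. Taking the ceiling yields a bound of exactly the form stated in \eqref{bound_time_steps} — the paper's notation writes $\Delta$ for $\Delta^{[0]}$ and, apparently, $x^*$ for (a bound on) the relevant distance $R$; I would make that identification explicit and, once $3\Delta^{[\nu_0]} > |x^* - b_q|$ holds, conclude $x^* \in (b_q - 3\Delta^{[\nu_0]}, b_q + 3\Delta^{[\nu_0]})$ as claimed.

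The main obstacle I anticipate is the careful handling of $b_q$: it is not fixed — each zoom-out resets it to the new (saturated) estimate $x_i^{[k+1]}$, so the target interval is moving while it grows. I would resolve this by showing the basis updates are themselves bounded (the saturated output is always within $7\Delta^{[\nu]}/2$ of $b_q$, and more importantly the true estimates are confined to a compact neighbourhood of $x^*$ determined by $\alpha$, $\mu$, $L$ and the initial condition), so a single uniform $R$ works and the geometric growth of $3\Delta^{[\nu]}$ overtakes it after finitely many steps. A secondary, more cosmetic issue is reconciling the exact constants inside the ceiling in \eqref{bound_time_steps} with the estimates above; I would present the bound as an upper estimate on the number of zoom-out instances rather than an exact count, which is all that is needed for the finiteness conclusion that feeds into the convergence theorem.
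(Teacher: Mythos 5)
Your core argument is the same one the paper uses: under repeated zoom-outs the quantization level satisfies $\Delta^{[\nu]} = (c^{\text{out}})^{\nu}\Delta^{[0]}$, so the half-width $3\Delta^{[\nu]}$ grows geometrically and must eventually exceed the distance from the basis to $x^*$, which yields the bound in \eqref{bound_time_steps}. Where you differ is on the moving basis: the paper's proof explicitly assumes, without loss of generality, that $b_q$ stays equal to zero throughout (i.e., that \eqref{zoom_out_1} is never executed), and defers the case of an updated basis to an extended version, whereas you try to handle it by arguing that the basis resets land in a compact set so a single uniform $R$ with $|x^* - b_q| \le R$ works. That is the more honest treatment, but it is also the one step of your argument that needs more care than you give it: the basis is reset to $x_i^{[k+1]}$, which is the output of Algorithm~\ref{alg2} applied to \emph{saturated} quantized values, and the per-step offset from the old basis can be as large as $O(\Delta^{[\nu]})$, which itself grows geometrically; appealing to compactness of the unquantized gradient-descent iterates does not immediately bound the quantized, clipped averages. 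You also correctly flag that the expression $x^* - \log(3\Delta)$ in \eqref{bound_time_steps} is dimensionally inconsistent and should read $\log R - \log(3\Delta^{[0]})$ for an appropriate distance bound $R$; the paper's own derivation reproduces the stated formula without addressing this. In short, your proposal matches the paper's argument on the case the paper actually proves, and sketches (but does not fully close) the harder case the paper leaves open.
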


We now show the linear convergence rate of Algorithm~\ref{alg1} in the following theorem. 
The proof follows from \cite[Theorem~1]{2023:Rikos_Johan_IFAC} and will be available in an extended version of our paper. 

\begin{theorem}[Linear Convergence Rate \cite{2023:Rikos_Johan_IFAC}]\label{converge_Alg1}
Under Assumptions~\ref{str_conn}--\ref{digr_diam}, when the step-size $\alpha$ satisfies $ \alpha \in (0, \frac{2n}{\mu + L}] $ where 
$L =  \sum_i L_i$, and $\mu = \min_i \mu_i$ 
Algorithm~\ref{alg1} generates a sequence of points $ \{x^{[k]}\} $ (i.e., the variable $x_i^{[k]}$ of each node $v_i \in \mathcal{V}$) which satisfy
	\begin{align}\label{linear_convergence}
	\| \hat{x}^{[k+1]} - x^{*}\|
	\le
	(1-\frac{\alpha \mu}{n})\| \hat{x}^{[k]} - x^{*}  \| + \mathcal{O}(\Delta^{[\nu_{\text{total}}]}) ,
	\end{align}
	where $\Delta^{[\nu_{\text{total}}]} $ is the quantizer level and
		\begin{align}
		 \mathcal{O}(\Delta^{[\nu_{\text{total}}]}) 
		 =&( \frac{4\alpha L}{n} + 2)\Delta^{[\nu_{\text{total}}]}. \label{throrem1_b} 
		\end{align} 
\end{theorem}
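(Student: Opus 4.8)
The plan is to reduce the analysis of Algorithm~\ref{alg1} to a perturbed gradient-descent recursion on the \emph{average} iterate, exactly as in the analysis of standard quantized distributed gradient descent, and then to control the perturbation by the quantizer level $\Delta^{[\nu_{\text{total}}]}$. Define $\hat{x}^{[k]} := \frac{1}{n}\sum_{i=1}^n x_i^{[k]}$. The key structural fact, which I would invoke from the finite-time quantized averaging routine Algorithm~\ref{alg2} (FiTQuAC, cf.\ \cite{rikos2023distributed}), is that after its $D$-periodic stopping rule fires, every node holds a common value $x_i^{[k+1]} = m_i\Delta^{[\nu_{\text{total}}]}$ that approximates the true average of the quantized half-step iterates $Q^{\text{3MRU}}(b_q, x_i^{[k+1/2]}, \Delta^{[\nu_{\text{total}}]})$ up to one quantization bin. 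Concretely, I would establish that
\[
\Bigl\| x_i^{[k+1]} - \frac{1}{n}\sum_{j=1}^n x_j^{[k+1/2]} \Bigr\| \le \Bigl(1 + \frac{1}{\text{(something)}}\Bigr)\Delta^{[\nu_{\text{total}}]},
\]
more precisely that the aggregate error between $\hat{x}^{[k+1]}$ and $\hat{x}^{[k]} - \frac{\alpha}{n}\sum_i \nabla f_i(x_i^{[k]})$ is $O(\Delta^{[\nu_{\text{total}}]})$; the two sources are (i) the per-node mid-rise quantization error, bounded by $\Delta^{[\nu_{\text{total}}]}/2$ in each bin, and (ii) the finite-time-consensus rounding error, bounded by one more $\Delta^{[\nu_{\text{total}}]}$ from the $\lceil\cdot\rceil,\lfloor\cdot\rfloor$ stopping condition $M_i-m_i\le 1$.

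Next I would treat the idealized recursion. Writing $g(x) := \frac{1}{n}\sum_i \nabla f_i(x)$ for the averaged gradient (so that $\nabla F$ restricted to the consensus subspace behaves like $n\,g$), the centralized map $x \mapsto x - \frac{\alpha}{n}\nabla F(\mathbb{1}x) = x - \alpha g(x)$ is the standard gradient step for an $L$-smooth, $\mu$-strongly convex objective once we identify $L = \max_i L_i$ and $\mu = \min_i \mu_i$ as in Assumption~\ref{assup_convex}. The classical contraction lemma for strongly convex smooth functions gives, for $\alpha \in (0, \frac{2n}{\mu+L}]$,
\[
\Bigl\| \hat{x}^{[k]} - \frac{\alpha}{n}\nabla F(\mathbb{1}\hat{x}^{[k]}) - x^* \Bigr\| \le \Bigl(1 - \frac{\alpha\mu}{n}\Bigr)\,\|\hat{x}^{[k]} - x^*\|.
\]
Then I would combine this with the perturbation bound by a triangle inequality, also accounting for the fact that the gradients in Algorithm~\ref{alg1} are evaluated at the individual $x_i^{[k]}$ rather than at $\hat{x}^{[k]}$; but since after the first consensus round all $x_i^{[k]}$ are equal (they are all $m_i\Delta^{[\nu_{\text{total}}]}$), this discrepancy vanishes and no extra disagreement term is needed. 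Collecting constants, the Lipschitz term contributes $\frac{4\alpha L}{n}\Delta^{[\nu_{\text{total}}]}$ and the consensus/quantization rounding contributes $2\Delta^{[\nu_{\text{total}}]}$, yielding exactly \eqref{throrem1_b}.

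The main obstacle I anticipate is making the $O(\Delta^{[\nu_{\text{total}}]})$ bookkeeping tight and honest: one has to verify that Algorithm~\ref{alg2} indeed terminates (finite-time consensus on a quantized average over a strongly connected digraph, which requires the mass-preservation invariant $\sum_i z_i = 2n$, $\sum_i y_i = \text{const}$ and the max/min stopping certificate), and that its output error is genuinely $\le \Delta^{[\nu_{\text{total}}]}$ uniformly in $k$, independent of the zoom state. A secondary subtlety is the interaction with the zoom logic in Step~3: when $\Delta^{[\nu_{\text{total}}]}$ changes, \eqref{linear_convergence} must still hold with the \emph{updated} $\Delta^{[\nu_{\text{total}}]}$ on the right-hand side, which is fine because the inequality is stated step-wise, but one should note that the $O(\Delta^{[\nu_{\text{total}}]})$ term shrinks geometrically along the zoom-in phases (by factor $1/c^{\text{in}}$), so the error floor is driven to zero and exact convergence follows — this last observation is really the content of the companion claim and I would only sketch it here, citing \cite[Theorem~1]{2023:Rikos_Johan_IFAC} for the contraction step and Proposition~\ref{zoom_out_locate_solution} for the finiteness of the zoom-out phase.
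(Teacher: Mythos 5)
Your proposal follows essentially the same route as the paper: the paper's own proof is a short sketch that defers the contraction step to \cite[Theorem~1]{2023:Rikos_Johan_IFAC} and only adds the observation that the recursion is interleaved with zoom events which refine $\Delta^{[\nu_{\text{total}}]}$, while you reconstruct the argument behind that citation (perturbed gradient descent on the average iterate, the classical $(1-\alpha\mu/n)$ contraction for $\alpha\in(0,\tfrac{2n}{\mu+L}]$, and $\mathcal{O}(\Delta^{[\nu_{\text{total}}]})$ bookkeeping for the quantization and finite-time-consensus rounding errors). One internal inconsistency worth fixing: you argue that the gradient-disagreement term vanishes because all $x_i^{[k]}$ coincide after a consensus round, yet you then attribute the $\tfrac{4\alpha L}{n}\Delta^{[\nu_{\text{total}}]}$ contribution to ``the Lipschitz term''; under your own reduction that constant must instead come from the gradients being evaluated at the quantized/rounded iterates (which differ from the idealized average by $\mathcal{O}(\Delta^{[\nu_{\text{total}}]})$), so the factor $\tfrac{4\alpha L}{n}$ is asserted rather than derived --- the paper sidesteps this by citing the reference, but a self-contained proof would need that accounting made explicit.
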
 

\section{Simulation Results} \label{sec:results}





In this section, we present simulation results in order to illustrate Algorithm~\ref{alg1} and highlight its operational characteristics. 
We illustrate Algorithm~\ref{alg1} over a random digraph of $20$ nodes (see Fig.~\ref{20_nodes_simple_finite_bit}).   
We show how nodes are able to adjust the quantizer’s parameters by performing zoom-out, or zoom-in. 
Subsequently, we show how nodes are able to converge linearly to the exact optimal solution of problem \textbf{P1} by exchanging messages consisted of only $3$-bits.   

During Algorithm~\ref{alg1} the parameters for each node $v_i$ are: $\alpha = 0.12$, initial estimation of optimal solution $x_i^{[0]} \in [1, 5]$, quadratic local cost function of the form $f_i(x) = \frac{1}{2} \beta_i (x - x_0)^2$ with $\beta_i, x_0$ randomly chosen in the set $\{ 1, 2, 3, 4, 5 \}$, initial quantization level $\Delta^{[0]} = 0.5$, zoom-in constant $c^{\text{in}} = 4/3$, zoom-out constant $c^{\text{out}} = 2$, and initial quantizer basis $b_q = 0$. 
We plot the error $e^{[k]}$ in a logarithmic scale against the number of iterations defined as 
\begin{equation}\label{eq:distance_optimal}
    e^{[k]} = \sqrt{ \sum_{j=1}^n \frac{(x_j^{[k]} - x^*)^2}{(x_j^{[0]} - x^*)^2} } , 
\end{equation}
where $x^*$ is the optimal solution of optimization problem \textbf{P1}.   

\vspace{.3cm}

In Fig.~\ref{20_nodes_simple_finite_bit} during the operation of Algorithm~\ref{alg1} we can see that at time step $k=2$ the condition $x_i^{[2]} = x_i^{[1]}$ holds for all nodes. 
Thus, nodes perform zoom-out during time step $k = 2$. 
They adjust the quantizer basis $b_q$, and increase the quantization level $\Delta^{[0]}$ according to \eqref{zoom_out_1} and \eqref{zoom_out_2}, thereby widening the quantizer range to encompass a broader range of values. 
At time step $k=4$ the condition $x_i^{[4]} = x_i^{[3]}$ holds. 
Nodes again perform zoom out (i.e. adjust the quantizer parameters $b_q, \Delta^{[1]}$) aiming for the adjusted quantizer range to include the optimal solution. 
Since the quantizer error depends on the quantization level then increasing $\Delta^{[\nu_{\text{total}}]}$ increases the error introduced in the estimation of the optimal solution. 
This increases the distance between the optimal solution and each node's estimation of the optimal solution. 
For this reason, the error $e^{[k]}$ increases at time steps $k = 3, 5$. 
At time step $k = 6$ we have that $x_i^{[6]} = x_i^{[5]}$ and nodes decide to zoom-in. 
Performing zoom-in at time step $k = 6$ means that the zoom-out operations at time steps $k = 2, 4$ adjusted $b_q, \Delta^{[0]}$ in such a way that the quantizer range now includes the optimal solution. 
During the zoom-in at time step $k = 6$, nodes adjust $b_q$, and decrease $\Delta^{[2]}$ according to \eqref{zoom_in_1} \eqref{zoom_in_2}. 
This enables them to approximate the optimal solution with higher precision at time step $k = 7$ (as we can see $e^{[7]} < e^{[6]}$). 
Then, at time step $k = 9$ nodes zoom in since $x_i^{[9]} = x_i^{[8]}$. 
This enables them to approximate the optimal solution with even higher precision at time step $k = 10$. 
By continuing this iterative process (i.e, perform zoom-in and approximate the optimal solution with even higher precision every time) we have that nodes are able to converge to the exact optimal solution during the operation of Algorithm~\ref{alg1} (as shown in Theorem~\ref{converge_Alg1}). 

\begin{figure}[t]
    \centering
    \includegraphics[width=\linewidth]{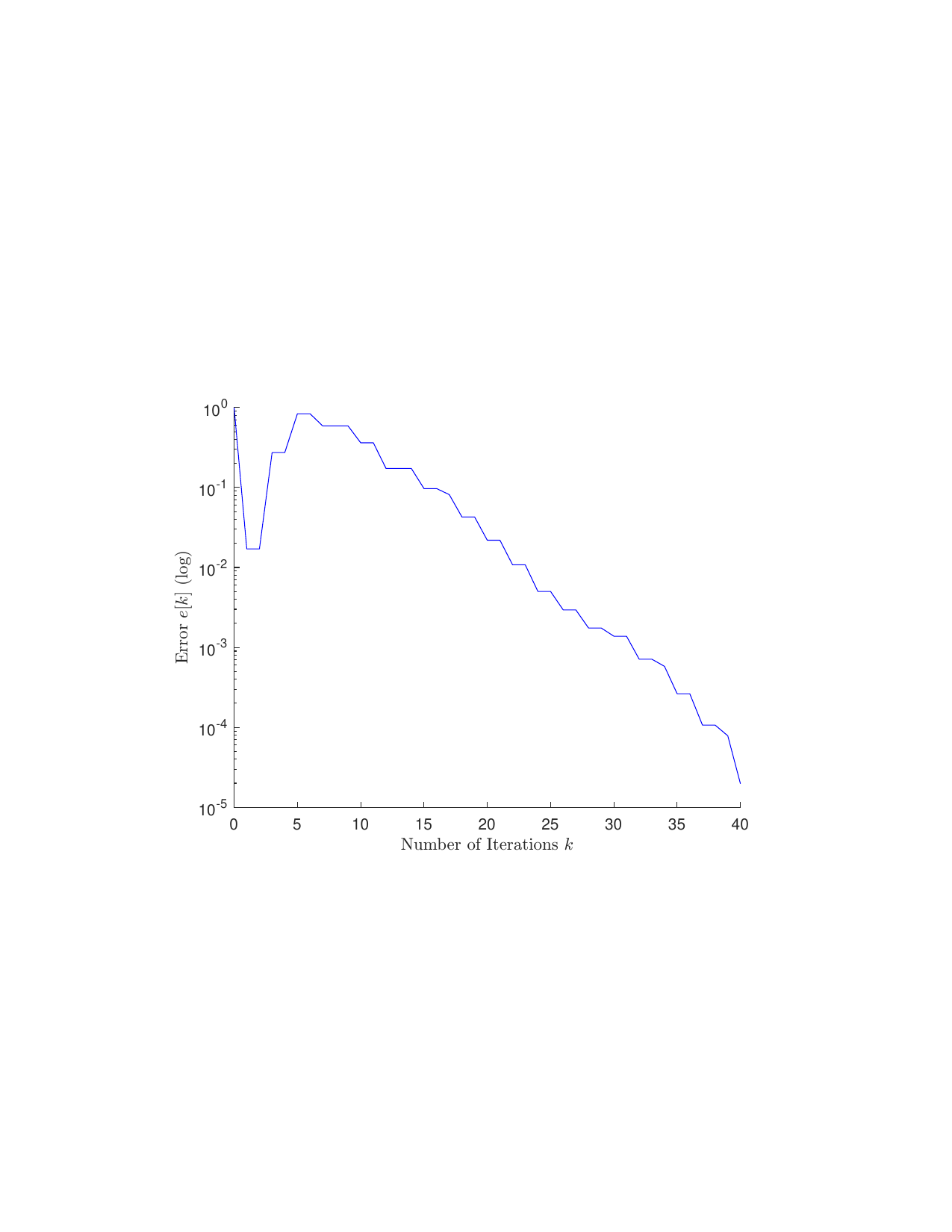}
    \caption{Execution of Algorithm~\ref{alg1} over a random digraph of $20$ nodes.}
    \label{20_nodes_simple_finite_bit}
\end{figure}


\begin{remark}[Possible $c^{\text{in}}$ Values]  
    Regarding the operation of Algorithm~\ref{alg1} in Fig.~\ref{20_nodes_simple_finite_bit} we make the following two observations for the zoom-in constant $c^{\text{in}}$.  
    \begin{itemize}
        \item In Algorithm~\ref{alg1}, nodes set $c^{\text{in}} = 4/3$. 
        This means that after the zoom-in operation the adjusted quantizer will cover the range of values $[b_q - 16\Delta^{[\nu_{\text{total}}-1]}/3 , b_q + 16\Delta^{[\nu_{\text{total}}-1]}/3)$. 
        Note that Algorithm~\ref{alg2} (i) introduces a quantization error upper bounded by $\Delta^{[\nu_{\text{total}}]} / 2$ during its initialization, and (ii) gives the average of nodes states with a quantization error $\Delta^{[\nu_{\text{total}}]}$. 
        Thus, the updated interval $[b_q - 16\Delta^{[\nu_{\text{total}}-1]}/3 , b_q + 16\Delta^{[\nu_{\text{total}}-1]}/3)$ includes the optimal solution despite the accumulated quantization error from Algorithm~\ref{alg2}. 
        Note also that setting $c^{\text{in}} > 2$ for our $3$ bit quantizer may lead to an adjusted quantizer range that does not include the optimal solution after reducing the quantization level $\Delta^{[\nu_{\text{total}}]}$. 
        This may lead to nodes performing zoom-out so that the updated range will include again the optimal solution. 
        As a result, setting $c^{\text{in}} > 2$ may lead to consecutive zoom-in and zoom-out operations that may prevent Algorithm~\ref{alg1} from converging to the exact optimal solution. 
        \item In case we utilize a quantizer with higher resolution (i.e., with $4$ or more bits) then setting the zoom-in constant equal to $c^{\text{in}} = 4/3$ will also allow us to zoom towards an interval that includes the optimal solution. 
        However, in this case we also could possibly increase the zooming constant to be greater than $4/3$ in order to zoom in faster thus enhance the speed of convergence of the Algorithm~\ref{alg1}. 
    \end{itemize}
    Note that analyzing a possible range of values for $c^{\text{in}}$ (according to the two aforementioned observations) is outside the scope of this paper but will be considered in an extended version. 
\end{remark}

\section{Conclusions}\label{sec:conclusions}

In this paper, we focused on distributed optimization problem. 
In our problem each node's local function is strongly convex with lipschitz continuous gradients, and nodes exchange messages represented by $3$-bits via bandwidth limited channels. 
We presented a distributed algorithm that enables nodes to estimate the exact optimal solution. 
We analyzed our algorithm's operation and established its linear convergence rate. 
We concluded with simulations 
in order to highlight our algorithm's operational advantages. 
\bibliographystyle{IEEEtran}
\bibliography{bibliografia_consensus}

\end{document}